\def\@seccntformat#1{\csname the#1\endcsname.\ \,}\makeatother
\newcommand*{\doi}[1]{doi:\href{https://doi.org/#1}{\nolinkurl{#1}}}
\newcommand{\myorcidshort}[1]{\href{#1}{\includegraphics[height=1.9ex]{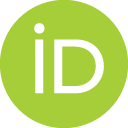}}}
\newcommand{\myorcidlong}[1]{\myorcidshort{#1} \url{#1}}
\newtheorem{theorem}{Theorem}[section]
\newtheorem{lemma}[theorem]{Lemma}
\theoremstyle{definition}
\newtheorem{remark}[theorem]{Remark}
\newtheorem{example}[theorem]{Example}
\newlist{enumerateroman}{enumerate}{3}
\setlist[enumerateroman]{wide=0pt,label={\upshape(\roman*)}}
\newcommand{\MS}{\boldsymbol{M\!S}}
\newcommand{\myssa}{\sum_{i} \alpha_i^2}
\newcommand{\myssaa}{\sum_{i,j} \alpha_{i(j)}^2}
\newcommand{\myssaaa}{\sum_{i,j,k} \alpha_{i(jk)}^2}
\begin{document}

\title{\vspace{-1.5cm}\textbf{Minimal sample size
in balanced ANOVA models of crossed, nested, and mixed classifications}}

\renewcommand{\thefootnote}{$*$}
\footnotetext[1]{Corresponding author.}

\renewcommand{\thefootnote}{}
\footnotetext[2]{\textit{E-mail address:}
\url{bernhard.spangl@boku.ac.at} (B.~Spangl).
\textit{Postal address:}
Institute of Statistics, University of Natural Resources and Life Sciences, Vienna,
Gregor-Mendel-Straße 33, 1180 Vienna, Austria.}

\date{\vspace{-1.5cm}}

\maketitle

\begin{center}  \large
Bernhard Spangl$^{\text{a},*}$
\myorcidshort{https://orcid.org/0000-0001-6222-2408},
Norbert Kaiblinger$^{\text{b}}$
\myorcidshort{https://orcid.org/0000-0001-6280-5929},
Peter Ruckdeschel$^{\text{c}}$
\myorcidshort{https://orcid.org/0000-0001-7815-4809},
Dieter Rasch$^{\text{a}}$
\myorcidshort{https://orcid.org/0000-0001-9324-9910}
\end{center}

\begin{center}  \footnotesize
$^\text{a}${Institute of Statistics,
University of Natural Resources and Life Sciences, Vienna, Austria} \\
$^\text{b}${Institute of Mathematics,
University of Natural Resources and Life Sciences, Vienna, Austria} \\
$^\text{c}${Institute for Mathematics,
Carl von Ossietzky University Oldenburg, Germany}
\end{center}

\medskip\hrule\medskip

\begin{small}

\noindent\textbf{Abstract:}\quad
We consider balanced one-, two- and three-way ANOVA models
to test the hypothesis that the fixed factor $A$ has no effect.
The other factors are fixed or random.
We determine the noncentrality parameter for the exact $F$\mbox-test,
describe its minimal value by a sharp lower bound,
and thus we can guarantee the worst case power for the $F$\mbox-test.
These results allow us to compute the minimal sample size,
i.e., the minimal number of experiments needed.
We also provide a structural result for the minimum sample size,
proving a conjecture on the optimal experimental design.

\medskip

\noindent\textbf{Keywords:}\quad
ANOVA. $F$\mbox-test. Crossed classification. Nested classification. Mixed classification.
Power. Experimental size determination.

\medskip

\noindent\textbf{MSC\,2010:}\quad
62K; 62J.

\end{small}
\smallskip
\hrule

\section{Introduction}

Consider a balanced one-, two- or three-way ANOVA model
with fixed factor $A$
to test the null hypothesis~$H_0$ that $A$ has no effect,
that is, all levels of $A$ have the same effect.
The other factors are denoted $B,C$ (crossed with or nested in $A$)
or $U,V$ (factors that $A$ is nested in).
They can be fixed factors (printed in normal font)
or random factors (printed in bold).
As usual in ANOVA we assume identifiability, normality, independence,
homogeneity, and compound symmetry
\citep{MAXWELL2017,SCHEFFE1959}.
In particular, the fixed effects are identifiable
and the random effects and errors have a normal distribution
with mean zero and they are mutually independent.
By $A\times B$ we denote crossed factors with interaction,
by $A \succ B$ we denote that $B$ is nested in $A$.
Practical examples that are modeled by crossed, nested and mixed classifications
are included, for example, in
\citet{CANAVOS2009},
\citet{DONCASTER2007},
\citet{JIANG2007},
\citet{MONTGOMERY2012},
\citet{RASCH1971},
\citet{RASCH2011},
\citet{RASCH2012},
\citet{RASCH2018},
\citet{RASCH2020b}.
The number of levels of $A$ ($B$, $C$, $U$, $V$)
is denoted by $a$ ($b$, $c$, $u$, $v$, respectively).
The effects are denoted by Greek letters.
For example, the effects of the fixed factor $A$
in the one-way model $A$,
the two-way nested model $V \succ A$,
and the three-way nested model $U \succ V \succ A$ read
\begin{equation}  \label{eq:maineffects}
\alpha_i, \ \alpha_{i(j)}, \ \alpha_{i(jk)},
\qquad
i=1,\dots,a,
\quad
j=1,\dots,v,
\quad
k=1,\dots,u ~.
\end{equation}
The numbers of levels (excluding $a$) and the number of
replicates $n$ will be called parameters in this article.

This article derives the details for the noncentrality parameter
and we show how to obtain the minimum sample size for a large family of ANOVA models.

\begin{itemize}

\item
We derive the details for the noncentrality parameter (\Cref{thm:main}).

\item
We derive the worst case noncentrality parameter (\Cref{thm:ineq}),
required to obtain the guaranteed power of an ANOVA experiment.

\item
We show how to determine the minimal experimental size for ANOVA experiments
by a new structural result that we call ``pivot'' effect (\Cref{thm:pivot}).
The ``pivot'' effect means one of the parameters
(the ``pivot'' parameter) is more power-effective than the others.
Considering this ``pivot'' effect is not only helpful for planning experiments
but is indeed necessary in certain models, see \Cref{rema:main}\ref{item:fullpower}.
\end{itemize}

Our main results are thus for the exact $F$\mbox-test noncentrality parameter,
the power, and the minimum sample size determination,
see \Cref{sec:main}.
In \Cref{sec:approx} we include two exceptional models
that do not have an exact $F$\mbox-test.
In \Cref{sec:integer} we discuss the distinction between
real and integer parameters for some of our results.
The proofs are in \mbox{\ref{sec:proofs}}.

\section{Main results}  \label{sec:main}

Consider a balanced 1-, 2- or 3-way ANOVA model,
with the notation above,
to test the null hypothesis $H_0$ that the fixed factor $A$ has no effect.
For most of these models an exact $F$\mbox-test exists,
under the usual assumptions mentioned above.
The test statistic $\boldsymbol{F}_{A}$ is given by a ratio
whose numerator is given by the mean squares (MS) of the fixed factor~$A$,
denoted by $\MS_{A}$. The denominator depends on the model.
The respective test statistic has an $F$\mbox-distribution
(central under $H_{0}$, noncentral in general).
We denote its parameters by
the numerator d.f.\ $\mathit{df_1}$, the denominator d.f.\ $\mathit{df_2}$,
and the noncentrality parameter $\lambda$.
The notation d.f.\ is short for degrees of freedom.

By $\sigma_y^2$ we denote the total variance, it is the sum of the
variance components,
such as $\sigma_{\beta}^2$ (the variance component of the factor $B$)
and the error term variance $\sigma^2$.

\subsection{The noncentrality parameter}  \label{sec:noncent}

Our first main result lists $\mathit{df_1}$, $\mathit{df_2}$
and the exact form of the noncentrality parameter $\lambda$.
Our expressions for $\lambda$
show the detailed form in which the variance components occur.
This exact form of $\lambda$ is the key to a reliable power analysis,
which is essential for the design of experiments.

\begin{theorem}  \label{thm:main}
Consider a balanced 1-, 2- or 3-way ANOVA model, with
the assumptions of identifiability, normality, independence,
homogeneity, and compound symmetry.
We test the null hypothesis $H_0$ that the fixed factor $A$ has no effect.
Then, under the assumption
that an exact $F$\mbox-test exists, the test statistic has an $F$\mbox-distribution
(central under $H_{0}$, noncentral in general)
with numerator d.f.\ $\mathit{df_1}$, denominator d.f.\ $\mathit{df_2}$,
and noncentrality parameter $\lambda = R S/T$
obtained from {\upshape\Cref{tab:main}}.
\end{theorem}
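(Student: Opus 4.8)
The plan is to derive the distribution of $\boldsymbol F_A$ from the classical theory of quadratic forms in balanced (mixed) linear models, in the spirit of \citet{SCHEFFE1959} and \citet{MAXWELL2017}, and then to read off $R$, $S$ and $T$ model by model from the table of expected mean squares. First I would write each of the listed models in canonical additive form: an observation equals the grand mean plus the fixed $A$-effect (one of $\alpha_i,\alpha_{i(j)},\alpha_{i(jk)}$ as in \eqref{eq:maineffects}), plus the remaining main, interaction and nesting terms, plus the error, with the random terms independent and $N(0,\cdot)$ and with the restricted (compound-symmetry) convention for interaction terms that involve a random factor. Collecting the random part into one Gaussian vector with covariance $\Sigma$, balance ensures that $\Sigma$ commutes with every ANOVA projection $P_X$ (each $P_X$ being a Kronecker product of centering and averaging matrices), so the ranges of the $P_X$ simultaneously diagonalise $\Sigma$. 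By Cochran's theorem this makes the sums of squares $SS_X=y^\top P_X y$, each divided by the scalar $\theta_X$ that equals its expected mean square, mutually independent $\chi^2_{df_X}$ variables; the only noncentral one is $SS_A$, because $P_A y$ has nonzero mean precisely when $H_0$ fails.

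Next I would compute $E[SS_A]$. Under identifiability the $A$-effects are centred in their own index, so the mean-induced term in $E[SS_A]$ equals $R\,S$, where $S\in\{\myssa,\myssaa,\myssaaa\}$ according to the model and $R$ is $n$ times the product of the numbers of levels of the factors crossed with or nested in $A$; the remaining part of $E[SS_A]$ is $df_1\cdot T$ for a suitable linear combination $T$ of variance components. Hence $SS_A/T\sim\chi^2_{df_1}(\lambda)$ with $\lambda T=E[SS_A]-df_1\,T=R\,S$, that is, $\lambda=R\,S/T$. Saying that an exact $F$-test exists means exactly that some available mean square has expectation equal to this same $T$ and, being built from random-effect contrasts orthogonal to the $A$-contrasts, is independent of $SS_A$; taking it as the denominator yields $\boldsymbol F_A\sim F_{df_1,df_2}(\lambda)$, central under $H_0$, with the stated $\lambda=R\,S/T$.

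It then remains to populate \Cref{tab:main}: for each row --- the one-way model, the two-way crossed and nested models $A\times B$, $A\succ B$, $B\succ A$ and their fixed/random variants, and the three-way analogues --- I would write the standard restricted-model EMS table (via the Hartley/Cornfield--Tukey rules) and read off $df_1$, $df_2$, $R$, $S$, $T$. This step is bookkeeping rather than mathematics. The delicate point, and where I expect the argument to need the most care, is the mixed-model convention: compound symmetry forces the \emph{restricted} model, which changes several EMS entries relative to the unrestricted one --- in particular which random main-effect variance, or only the corresponding interaction variance, enters $E[\MS_A]$ and hence $T$. Keeping this convention consistent across all models, and verifying in every listed row that an exact $F$-test really does exist (which is exactly what fails for the two models postponed to \Cref{sec:approx}), is the substance of the verification.
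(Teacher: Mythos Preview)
Your proposal is correct and arrives at the same $\lambda$, but it is considerably more foundational than what the paper actually does. The paper's proof takes the noncentral $F$\mbox-structure for granted (this is packaged into the hypothesis ``an exact $F$\mbox-test exists'') and then uses a one-line device, Lemma~A.1: if $\boldsymbol F_A=\MS_A/\MS_{\mathrm{den}}$ is noncentral $F$ with the obvious scaled-$\chi^2$ numerator and denominator, then $\lambda=\mathit{df_1}\bigl(E(\MS_A)/E(\MS_{\mathrm{den}})-1\bigr)$. Plugging in the published EMS entries for one representative model ($A\succ\boldsymbol B\succ\boldsymbol C$) immediately yields $\lambda$, and the remaining rows are declared analogous. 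Your route instead \emph{derives} the $\chi^2$/independence structure from the commutativity of $\Sigma$ with the ANOVA projections and Cochran's theorem, which is more self-contained and explains \emph{why} an exact $F$\mbox-test exists exactly when some other EMS matches the variance-component part of $E(\MS_A)$; the paper simply cites EMS tables for that. One small discrepancy to be aware of: with your convention the mean-induced part of $E[SS_A]$ is the full replication factor times $S$ (e.g.\ $bcn\,\myssa$) and your $T$ is the denominator EMS itself, whereas the paper normalises so that the leading variance component in $T$ has coefficient~$1$ (e.g.\ $T=\sigma_{\beta(\alpha)}^2+\tfrac1c\sigma_{\gamma(\alpha\beta)}^2+\tfrac1{cn}\sigma^2$ and $R=b$). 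The ratio $RS/T$ is of course the same, but to reproduce \Cref{tab:main} verbatim you need that extra rescaling step.
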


The proof of \Cref{thm:main} is in \ref{sec:proofs}.

\begin{table}[!htbp]
\caption{
List of 1-, 2- and 3-way ANOVA models with fixed factor $A$,
for use in \Cref{thm:main} etc.
The letters $a,b,\dots$ denote the numbers of levels,
and $n$ is the number of replicates.
To point out equivalences, the variance component notation is simplified,
such as $\sigma_{\alpha,\beta}^2$ represents both
$\sigma_{\alpha\beta}^2$ and $\sigma_{\beta(\alpha)}^2$.
In the first column, bold font indicates random factors.
The ``pivot'' parameter, also printed in bold to indicate randomness,
is the most power-effective parameter, see \Cref{thm:pivot}.
}
\label{tab:main}

\centering

\resizebox*{!}{.9\textheight}{
\begin{tabular}{|l|c|ll|lll|}
\hline
\multirow{2}{*}{Model}
& Pivot pa-
& \multirow{2}{*}{$\mathit{df_1}$} & \multirow{2}{*}{$\mathit{df_2}$}
& \multicolumn{3}{l|}{$\lambda = R S/T$
} \\
\cline{5-7}
& rameter & & & $R$ & $S$ & $T$ \bigstrut[t] \\
\hline
\hline
$A$ & $\boldsymbol{n}$ & $a-1$ & $a (n-1)$
& $n$ & $\myssa$ & $\sigma^2$ \bigstrut[t] \\
\hline
\hline
$A \times B$ & $\boldsymbol{n}$ & `` & $ab (n-1)$ & $b n$ & `` & `` \\
$A \succ B$ & `` & `` & `` & `` & `` & `` \\
\hline
$A \times \boldsymbol{B}$ & $\boldsymbol{b}$ & `` & $(a-1) (b-1)$
& $b$ & `` & $ \sigma_{\alpha,\beta}^2 + \frac{1}{n}\sigma^2$ \bigstrut[t] \\
$A \succ \boldsymbol{B}$ & `` & `` & $a (b-1)$ & `` & `` & `` \\
\hline
$V \succ A$ & $\boldsymbol{n}$ & $v(a-1)$ & $v a (n-1)$
& $n$ & $\myssaa$ & $\sigma^2$ \bigstrut[t] \\
$\boldsymbol{V} \succ A$ & `` & `` & ``
& `` & `` & `` \\
\hline
\hline
$A \times B \times C$ & $\boldsymbol{n}$ & $a-1$ & $abc (n-1)$
& $b c n$ & $\myssa$ & $\sigma^2$ \bigstrut[t] \\
$A \succ B \succ C$ & `` & `` & `` & `` & `` & `` \\
$(A \times B) \succ C$ & `` & `` & `` & `` & `` & `` \\
$(A \succ B) \times C$ & `` & `` & `` & `` & `` & `` \\
$A \times (B \succ C)$ & `` & `` & `` & `` & `` & `` \\
\hline
$A \succ B \succ \boldsymbol{C}$ & $\boldsymbol{c}$ & `` & $a b(c-1)$
& $b c$ & `` & $ \sigma_{\alpha,\beta,\gamma}^2 + \frac{1}{n}\sigma^2$ \bigstrut[t] \\
$(A \times B) \succ \boldsymbol{C}$ & `` & `` & `` & `` & `` & `` \\
$A \times (B \succ \boldsymbol{C})$ & `` & `` & $(a-1) b (c-1)$
& `` & `` & `` \\
$(A \succ B) \times \boldsymbol{C}$ & `` & `` & $(a-1)(c-1)$
& $c$ & `` & $ \sigma_{\alpha,\gamma}^2 + \frac{1}{bn}\sigma^2$ \\
\hline
$A \times \boldsymbol{B} \times C$ & $\boldsymbol{b}$ & `` & $(a-1)(b-1)$
& $b$ & `` & $ \sigma_{\alpha,\beta}^2 + \frac{1}{cn}\sigma^2$ \bigstrut[t] \\
$(A \times \boldsymbol{B}) \succ C$ & `` & `` & `` & `` & `` & `` \\
$A \times (\boldsymbol{B} \succ C)$ & `` & `` & `` & `` & `` & `` \\
$A \succ \boldsymbol{B} \succ C$ & `` & `` & $a (b-1)$ & `` & `` & `` \\
$(A \succ \boldsymbol{B}) \times C$ & `` & `` & `` & `` & `` & `` \\
\hline
$A \succ \boldsymbol{B} \succ \boldsymbol{C}$ & $\boldsymbol{b}$
& `` & `` & `` & `` & $ \sigma_{\alpha,\beta}^2 + \frac{1}{c} \sigma_{\alpha,\beta,\gamma}^2 + \frac{1}{cn}\sigma^2$ \bigstrut[t] \\
$(A \times \boldsymbol{B}) \succ \boldsymbol{C}$
& `` & `` & $(a-1) (b-1)$ & `` & `` & `` \\
$A \times (\boldsymbol{B} \succ \boldsymbol{C})$ & `` & `` & `` & `` & `` & `` \\
\hline
\hline
$V \succ A \succ B$ & $\boldsymbol{n}$ & $v (a-1)$ & $vab (n-1)$
& $b n$ & $\myssaa$ & $\sigma^2$ \bigstrut[t] \\
$(V \succ A) \times B$ & `` & `` & `` & `` & `` & `` \\
$\boldsymbol{V} \succ A \succ B$ & `` & `` & `` & `` & `` & `` \\
$(\boldsymbol{V} \succ A) \times B$ & `` & `` & `` & `` & `` & `` \\
\hline
$V \succ A \succ \boldsymbol{B}$ & $\boldsymbol{b}$ & `` & $v a (b-1)$
& $b$ & `` & $ \sigma_{\nu,\alpha,\beta}^2 + \frac{1}{n}\sigma^2$ \bigstrut[t] \\
$\boldsymbol{V} \succ A \succ \boldsymbol{B}$ & `` & `` & `` & `` & `` & `` \\
$(V \succ A) \times \boldsymbol{B}$ & `` & `` & $v (a-1) (b-1)$ & `` & `` & `` \\
$(\boldsymbol{V} \succ A) \times \boldsymbol{B}$
& `` & `` & `` & `` & `` & `` \\
\hline
\hline
$U \succ V \succ A$ & $\boldsymbol{n}$ & $u v (a- 1)$ & $uva (n-1)$
& $n$ & $\myssaaa$ & $\sigma^2$ \bigstrut[t] \\
$(U \times V) \succ A$ & `` & `` & `` & `` & `` & `` \\
$\boldsymbol{U} \succ V \succ A$ & `` & `` & `` & `` & `` & `` \\
$U \succ \boldsymbol{V} \succ A$ & `` & `` & `` & `` & `` & `` \\
$(U \times \boldsymbol{V}) \succ A$ & `` & `` & `` & `` & `` & `` \\
$\boldsymbol{U} \succ \boldsymbol{V} \succ A$ & `` & `` & `` & `` & `` & `` \\
$(\boldsymbol{U} \times \boldsymbol{V}) \succ A$ & `` & `` & `` & `` & `` & `` \\
\hline

\end{tabular}
}
\end{table}

\begin{example}  \label{exam:denominator}
For the model $A \succ \boldsymbol{B} \succ \boldsymbol{C}$,
\Cref{thm:main} states that
the test statistic $\boldsymbol{F}_{A} = \MS_{A} / \MS_{B \, \text{in} \, A}$
has an $F$\mbox-distribution
(central under $H_0$, noncentral in general)
with numerator d.f.\ $\mathit{df_1}=a-1$, denominator d.f.\ $\mathit{df_2}=a(b-1)$,
and noncentrality parameter
\begin{equation*}
\lambda = R S/T
= b \cdot \frac{\myssa}{\sigma_{\beta(\alpha)}^2 + \frac{1}{c} \sigma_{\gamma(\alpha\beta)}^2 + \frac{1}{cn}\sigma^2} ~.
\end{equation*}
\end{example}

\begin{remark}  \label{rema:main}
\begin{enumerateroman}
\item
The models $A \times \boldsymbol{B} \times \boldsymbol{C}$
and $(A \succ \boldsymbol{B}) \times \boldsymbol{C}$
are excluded from \Cref{tab:main},
since an exact $F$\mbox-test does not exist, see \Cref{sec:approx}.
We also exclude the nesting of crossed factors into others, such as
$A \succ (B \times C)$.

\item  \label{item:fullpower}
From inspecting the expression for $\lambda$ in \Cref{exam:denominator}
we obtain the following somewhat surprising observation.
If $n$ increases, then clearly $\lambda$ increases,
but in the limit $n\to\infty$ we do not obtain $\lambda\to\infty$.
It implies that increasing the number of replicates $n$
increases the power but there is a limit for the power
if only $n$ is increased.
This observation affects each model in \Cref{tab:main} with $T$
consisting of more than one term.
These are exactly the models which in \Cref{tab:main} do not have
the parameter $n$ in the ``pivot'' column.
In fact, the ``pivot'' effect (\Cref{thm:pivot} below) shows that
for these models not $n$ but a different parameter should be increased
to achieve any given prespecified power.
\end{enumerateroman}
\end{remark}

\subsection{Least favorable case noncentrality parameter}

For an exact $F$\mbox-test, the computation of the power is
immediate:
Given the type~I risk $\alpha$, obtain the type~II risk $\beta$
by solving
\begin{equation}  \label{eq:pow}
F_{\mathit{df_1}, \mathit{df_2}; 1 - \alpha} = F_{\mathit{df_1}, \mathit{df_2}; \beta}^\lambda ~, 
\end{equation}
where $F_{\nu_1, \nu_2; \gamma}^\lambda$ denotes the $\gamma$-quantile of
the $F$-distribution with degrees of
freedom $\nu_1$ and $\nu_2$ and noncentrality parameter $\lambda$.
Then $P = 1-\beta$ is the power of the test.
The next theorem is our second main result,
we determine the noncentrality parameter $\lambda_{\min}$
in the least favorable case, that is,
the sharp lower bound in $\lambda \geq \lambda_{\min}$.
Using $\lambda_{\min}$ in \eqref{eq:pow}
yields the guaranteed power $P_{\min} = (1-\beta)_{\min}$ of the test.

Let $\delta$ denote the minimum difference to be detected between the
smallest and the largest treatment effects,
i.e., between the minimum $\alpha_{\min}$ and the maximum $\alpha_{\max}$
of the set of the main effects of the fixed factor~$A$,
\begin{equation}  \label{eq:delta}
\delta = \alpha_{\max} - \alpha_{\min} ~.
\end{equation}
We assume the standard condition
to ensure identifiability of parameters,
which is that $\alpha$ has zero mean in all directions
\citep[pp.\,157, 169, 178]{FOX2015},
\citep[Sec.\,3.3.1.1]{RASCH2011},
\citep[Sec.\,5]{RASCH2018},
\citep[Sec.\,5]{RASCH2020b},
\citep[Sec.\,4.1, p.\,92]{SCHEFFE1959},
\citep[p.\,415, Sec.\,7.2.i]{SEARLE2016}.
That is, exemplified for three models,
\begin{equation}  \label{eq:meanzero}
\begin{aligned}
A & \quad\Rightarrow\quad \sum_{i} \alpha_i^2 = 0, \\
V \succ A & \quad\Rightarrow\quad
\sum_{i} \alpha_{i(j_0)}^2 = 
\sum_{j} \alpha_{i_0(j)}^2 = 0, \text{ for any $i_0,j_0$}, \\
U \succ V \succ A & \quad\Rightarrow\quad
\sum_{i} \alpha_{i(j_0 k_0)}^2 =
\sum_{j} \alpha_{i_0(j k_0)}^2 =
\sum_{k} \alpha_{i_0(j_0 k)}^2 = 0, \text{ for any $i_0, j_0, k_0$} ~.
\end{aligned}
\end{equation}

\begin{theorem}  \label{thm:ineq}
We have the following lower bounds for the noncentrality parameter $\lambda$.

\begin{enumerateroman}
\item  \label{item:rbound}
With the parameter or product of parameters
denoted $R$ in {\upshape\Cref{tab:main}}, we have
\begin{equation*}
\lambda \geq \frac{R}{2} \cdot \frac{\delta^2}{\sigma_{y}^2} .
\end{equation*}

More precisely,
denoting by $\sigma_{y,\mathit{active}}^2 \leq \sigma_{y}^2$ the sum of those variance components
that occur in $T$, we have
\begin{equation*}
\lambda \geq \frac{R}{2} \cdot \frac{\delta^2}{\sigma_{y,\mathit{active}}^2} .
\end{equation*}

\item
For the models in {\upshape\Cref{tab:main}} that involve a factor $V$
that $A$ is nested in, let $m = \max(v,a)$.
Then the lower bound in \ref{item:rbound} can be raised to
\begin{equation*}
\lambda \geq \frac{R}{2} \cdot \frac{\delta^2}{\sigma_{y,\mathit{active}}^2}
\cdot \frac{m}{m-1}.
\end{equation*}
\item
For the models in {\upshape\Cref{tab:main}} that involve the factors $U,V$
that $A$ is nested in, let $m_1 \leq m_2 \leq m_3$ denote $a,u,v$
sorted from least to greatest.
Then the lower bound in \ref{item:rbound} can be raised to
\begin{equation*}
\lambda \geq \frac{R}{2} \cdot \frac{\delta^2}{\sigma_{y,\mathit{active}}^2}
\cdot \frac{m_2 m_3}{(m_2-1)(m_3-1)}.
\end{equation*}
\end{enumerateroman}
\end{theorem}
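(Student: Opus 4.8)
The plan is to treat the three factors of $\lambda=RS/T$ separately. The factor $R$ is a fixed product of positive integers, so the statement reduces to an upper bound on $T$ and a lower bound on $S$. For $T$: reading off the last column of \Cref{tab:main}, $T$ is in every model a sum of variance components, each appearing with a coefficient of the form $1,\tfrac1n,\tfrac1c,\tfrac1{bn}$ or $\tfrac1{cn}$, hence with a coefficient in $(0,1]$, whereas $\sigma_{y,\mathit{active}}^2$ is by definition the sum of exactly those components with all coefficients equal to $1$; consequently $T\le\sigma_{y,\mathit{active}}^2\le\sigma_y^2$ and $\lambda\ge RS/\sigma_{y,\mathit{active}}^2$. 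Item (i) is then finished by the elementary remark that $S$, being the sum of the squares of all main effects of $A$, is at least $\alpha_{\max}^2+\alpha_{\min}^2$, and $\alpha_{\max}^2+\alpha_{\min}^2\ge\tfrac12(\alpha_{\max}-\alpha_{\min})^2=\tfrac12\delta^2$; this gives both displayed forms in (i) since $\sigma_{y,\mathit{active}}^2\le\sigma_y^2$, and note the identifiability conditions \eqref{eq:meanzero} are not even used here -- they enter only for sharpness and for items (ii)--(iii).

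For items (ii) and (iii) the quantity $S=\myssaa$ is an $a\times v$ array, respectively $S=\myssaaa$ an $a\times v\times u$ array, and \eqref{eq:meanzero} says precisely that every one-dimensional line of the array -- one index varying, the others fixed -- sums to zero. The engine is a \emph{layering estimate}: if a $d$-dimensional array ($d\le3$) with side lengths $n_1,\dots,n_d$ has all its lines summing to zero and contains an entry equal to $M$, then its sum of squares is at least $M^2\prod_{t=1}^{d}\tfrac{n_t}{n_t-1}$. I would prove this by induction on $d$: split the array into the $n_d$ parallel hyperplanes in the last coordinate; the hyperplane through the distinguished entry contributes at least $M^2\prod_{t<d}\tfrac{n_t}{n_t-1}$ by induction, while for each of its cells the zero-sum condition along the transverse line forces the other $n_d-1$ hyperplanes to contribute there at least $\tfrac1{n_d-1}$ times the square of the distinguished cell (Cauchy--Schwarz), and summing over the hyperplane yields the extra factor $\tfrac{n_d}{n_d-1}$. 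The same reasoning, started from a line that contains both $\alpha_{\max}$ and $\alpha_{\min}$ (whose sum of squares is $\ge\tfrac12\delta^2$ as in (i)), multiplies $\tfrac12\delta^2$ by the product of the factors $\tfrac{n_t}{n_t-1}$ over the directions transverse to that line.

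What remains is a short case analysis on how many index coordinates $\alpha_{\max}$ and $\alpha_{\min}$ share. If they share all but one coordinate they lie on a common line, and the boosted estimate applies at once. If they share fewer coordinates, apply the plain layering estimate to the two disjoint parallel sub-arrays that contain $\alpha_{\max}$ and $\alpha_{\min}$ respectively, and add the two bounds, using $\alpha_{\max}^2+\alpha_{\min}^2\ge\tfrac12\delta^2$ once more. In every case $S$ is at least $\tfrac12\delta^2$ times a product of factors $\tfrac{n_t}{n_t-1}$ running over two of the side lengths (three-way case) or over one of them (two-way case); since $t\mapsto\tfrac t{t-1}$ is decreasing, the smallest such product -- the one that the least favorable configuration realizes -- is the product over the two largest side lengths $m_2\le m_3$, respectively over the single largest $m=\max(v,a)$, which is exactly the factor in the statement. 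Multiplying back by $R/\sigma_{y,\mathit{active}}^2$ gives (ii) and (iii); for the models with $S=\myssaaa$ item (ii) follows as well, because $m=\max(v,a)$ is then one of $m_2,m_3$, and the bounds are sharp, witnessed by arrays supported on just the two extreme lines and flat away from their peaks.

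The one genuinely fiddly step is the bookkeeping in (iii): for each of the three alignment patterns -- the extremes sharing two, one, or none of the three coordinates -- one must verify that the sub-arrays invoked really do inherit the zero-line-sum conditions of \eqref{eq:meanzero}, and check that the product of $\tfrac t{t-1}$-factors that emerges is at least $\tfrac{m_2m_3}{(m_2-1)(m_3-1)}$. By the monotonicity of $t\mapsto\tfrac t{t-1}$ this reduces in each pattern to the single observation that a least favorable configuration aligns the two extreme effects along the two largest dimensions and leaves the smallest dimension unused; carrying this out uniformly across the patterns is where I expect the main effort to lie.
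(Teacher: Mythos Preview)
Your proposal is correct. Part~(i) matches the paper's proof closely: bound $T\le\sigma_{y,\mathit{active}}^2$ since every coefficient in $T$ lies in $(0,1]$, then bound $S\ge\delta^2/2$. The paper names this last step the Sz\H{o}kefalvi-Nagy inequality, while your version $S\ge\alpha_{\max}^2+\alpha_{\min}^2\ge\tfrac12\delta^2$ is a slightly more elementary rephrasing, and your observation that the zero-mean constraint \eqref{eq:meanzero} is not needed for the inequality itself (only for sharpness) is accurate.

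For parts (ii) and (iii) the paper takes a much shorter route: it simply quotes the array inequalities
\[
\myssaa\ \ge\ \frac{\delta^2}{2}\cdot\frac{m}{m-1},
\qquad
\myssaaa\ \ge\ \frac{\delta^2}{2}\cdot\frac{m_2m_3}{(m_2-1)(m_3-1)}
\]
from the companion paper \citet{KAIBLINGER2020} and is done. What you propose---the layering estimate via induction on the dimension, using Cauchy--Schwarz on each transverse zero-sum line, followed by a case split on how many coordinates $\alpha_{\max}$ and $\alpha_{\min}$ share---is precisely a self-contained proof of those cited inequalities. The bookkeeping you flag as fiddly does work out: when the extremes share exactly one coordinate, slicing along either of the two directions in which they differ always leaves the shared direction among the two transverse ones, and in every case at least one of the resulting products of $\tfrac{n_t}{n_t-1}$-factors meets or exceeds $\tfrac{m_2m_3}{(m_2-1)(m_3-1)}$ by the monotonicity of $t\mapsto\tfrac{t}{t-1}$. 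So your route trades the paper's brevity-by-citation for self-containedness; the underlying argument is presumably the same as in the cited reference.
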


The proof of \Cref{thm:ineq} is in \ref{sec:proofs}.

\begin{remark}  \label{rema:ineq}
\begin{enumerateroman}

\item
The importance of a lower bound for the noncentrality parameter $\lambda$
is its use for the power analysis,
required for the design of experiments.
By \Cref{thm:ineq} we establish such a bound.
The difference to the previous literature \citep{RASCH2011,RASCH2012}
is that we use the correct,
detailed form of the noncentrality parameter $\lambda$ from \Cref{thm:main},
and we use the new, sharp bound for the sum of squared effects
from \citep{KAIBLINGER2020}.

\item  \label{item:leastfavorable}
The bounds in \Cref{thm:ineq} are sharp.
The extremal case (minimal $\lambda$) occurs
if the main effects \eqref{eq:maineffects} of the factor $A$
are least favorable,
while satisfying \eqref{eq:delta} and \eqref{eq:meanzero},
and also the variance components are least favorable,
while their sum does not exceed $\sigma_{y}^2$.

For the extremal $\alpha_i, \ \alpha_{i(j)}, \ \alpha_{i(jk)}$
configurations we refer to \citet{KAIBLINGER2020}.
The least favorable splitting of $\sigma_{y}^2$
is that the total variance
is consumed entirely by the first term of $T$ in {\upshape\Cref{tab:main}},
see the worst cases in \Cref{exam:ineq}\ref{item:crossedexam},\ref{item:nestedexam}.

\item  \label{item:mostfavorable}
If in a model there are ``inactive'' variance components
(i.e., some components of the model do not occur in $T$),
then the most favorable splitting of $\sigma_{y}^2$ is that
the total variance tends to be consumed entirely by inactive components.
In these cases $\lambda$ goes to infinity, $\lambda\to\infty$.
See the best case in \Cref{exam:ineq}\ref{item:crossedexam}.

If in a model all variance components are ``active''
(i.e., all components of the model also occur in $T$),
then the most favorable splitting of $\sigma_{y}^2$ is that
the total variance is consumed entirely by the last term of $T$.
See the best case in \Cref{exam:ineq}\ref{item:nestedexam}.
\end{enumerateroman}
\end{remark}

\begin{example}  \label{exam:ineq}
\begin{enumerateroman}
\item  \label{item:crossedexam}
For the model $A \times \boldsymbol{B} \times C$,
from \Cref{tab:main} we have
\begin{equation*}
T = \sigma_{\alpha\beta}^2 + \frac{1}{cn}\sigma^2 ~.
\end{equation*}
The ``active'' variance components are defined to be the
variance components that occur in $T$,
\begin{equation*}
\sigma_{y}^2
= {\underbrace{\sigma_{\alpha\beta}^2 + \sigma^2}_{\displaystyle\sigma_{y,\mathit{active}}^2}}
+ \sigma_{\beta}^2 + \sigma_{\beta\gamma}^2 + \sigma_{\alpha\beta\gamma}^2 .
\end{equation*}
Since $R=b$, by \Cref{thm:ineq} we obtain for the noncentrality parameter $\lambda$,
\begin{equation*}
\begin{aligned}
\lambda
\geq \frac{b}{2} \cdot \frac{\delta^2}{\sigma_{y,\mathit{active}}^2}
\geq \frac{b}{2} \cdot \frac{\delta^2}{\sigma_{y}^2} ~.
\end{aligned}
\end{equation*}
Since the first term of $T$ is $\sigma_{\alpha\beta}^2$
and the inactive components are $\sigma_{\beta}^2, \sigma_{\beta\gamma}^2, \sigma_{\alpha\beta\gamma}^2$,
we obtain by \Cref{rema:ineq}
that the extremal total variance $\sigma_{y}^2$ splittings are
\begin{equation*}
(\sigma_{\alpha\beta}^2, \sigma^2, \sigma_{\beta}^2, \sigma_{\beta\gamma}^2, \sigma_{\alpha\beta\gamma}^2) \to
\begin{dcases}
(*,0,0,0,0), & \text{ worst, }
\lambda = \frac{b}{2} \cdot \frac{\delta^2}{\sigma_{y}^2}, \\
(0,0,*,*,*), & \text{ best, }
\lambda \to \infty .
\end{dcases}
\end{equation*}

\item  \label{item:nestedexam}
For the model $A \succ \boldsymbol{B} \succ \boldsymbol{C}$,
from \Cref{tab:main} we have
\begin{equation*}
T = \sigma_{\beta(\alpha)}^2 + \frac{1}{c} \sigma_{\gamma(\alpha\beta)}^2 + \frac{1}{cn}\sigma^2 .
\end{equation*}
All variance components occur in $T$,
thus all variance components are ``active'',
\begin{equation*}
\sigma_{y}^2
= \sigma_{y,\mathit{active}}^2
= \sigma_{\beta(\alpha)}^2 + \sigma_{\gamma(\alpha\beta)}^2 + \sigma^2 .
\end{equation*}
Since $R=b$, by \Cref{thm:ineq} we obtain for the noncentrality parameter $\lambda$,
\begin{equation*}
\begin{aligned}
\lambda
\geq \frac{b}{2} \cdot \frac{\delta^2}{\sigma_{y,\mathit{active}}^2}
= \frac{b}{2} \cdot \frac{\delta^2}{\sigma_{y}^2} ~.
\end{aligned}
\end{equation*}
In this model there are no ``inactive'' variance components,
and by \Cref{rema:ineq} we obtain
\begin{equation*}
(\sigma_{\beta(\alpha)}^2, \sigma_{\gamma(\alpha\beta)}^2, \sigma^{2}) \to
\begin{dcases}
(*,0,0), & \text{ worst, }
\lambda = \frac{b}{2} \cdot \frac{\delta^2}{\sigma_{y}^2}, \\
(0,0,*), & \text{ best, }
\lambda = \frac{bcn}{2} \cdot \frac{\delta^2}{\sigma_{y}^2}.
\end{dcases}
\end{equation*}
\end{enumerateroman}
\end{example}

\subsection{Minimal sample size}

The size of the $F$\mbox-test is the product of the parameters,
for the factors that occur in the model,
including the number $n$ of replications.
For prespecified power requirements $P\geq P_0$,
the minimal sample size can be determined by \Cref{thm:ineq}.
Compute $\lambda_{\min}$
and thus obtain the guaranteed power $P_{\min} = (1-\beta)_{\min}$,
for each set of parameters that belongs to a given size,
increasing the size until the power $P_0$ is reached.

The next theorem is the main structural result of our article.
We show that for given power requirements $P\geq P_0$,
the minimal sample size can be obtained
by varying only one parameter, which we call ``pivot'' parameter,
keeping the other parameters minimal.
We thus prove and generalize suggestions in \citet{RASCH2011},
see \Cref{rema:pivot}\ref{item:conj} below.
Part~\ref{item:pivotone} of the next theorem
describes the key property of the ``pivot'' parameter,
part~\ref{item:power} is an intermediate result,
and part~\ref{item:size} is the minimum sample size result.

\begin{theorem}  \label{thm:pivot}
Denote by ``pivot'' parameter the parameter in the second column of {\upshape\Cref{tab:main}}.
Then the following hold.
\begin{enumerateroman}
\item  \label{item:pivotone}
If a parameter increases, then the power increases most
if it is the ``pivot'' parameter.

\item \label{item:power}
For fixed size, if we allow the parameters to be real numbers,
then the maximal power occurs if the ``pivot'' parameter
varies and the other parameters are minimal.

\item  \label{item:size}
For fixed power, if we allow the parameters to be real numbers,
then the minimum size occurs if the ``pivot'' parameter
varies and the other parameters are minimal.
\end{enumerateroman}
\end{theorem}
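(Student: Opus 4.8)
The plan is to isolate a few monotonicity facts about the power $P=P(\mathit{df_1},\mathit{df_2},\lambda)$ of the exact $F$-test and combine them with a uniform structural reading of \Cref{tab:main}. First I would record the facts I need, all classical and valid for real $\mathit{df_1},\mathit{df_2}>0$ since the noncentral $F$ density is defined there: $P$ is continuous in all three arguments; for $\lambda>0$ it is strictly increasing in $\lambda$ and strictly increasing in $\mathit{df_2}$; and for fixed $\lambda$ it is strictly decreasing in $\mathit{df_1}$. Second, from \Cref{thm:ineq} together with \Cref{rema:ineq} I would extract that in the least favorable case $\lambda_{\min}=\tfrac{R}{2}\cdot\tfrac{\delta^2}{\sigma_{y,\mathit{active}}^2}\cdot\kappa$, where $\sigma_{y,\mathit{active}}^2$ does not depend on the parameters $b,c,u,v,n$, where $R$ is the product of parameters in the column headed $R$, and where the correction $\kappa\in\{1,\ \tfrac{m}{m-1},\ \tfrac{m_2m_3}{(m_2-1)(m_3-1)}\}$ depends only on $a,v,u$ and, by monotonicity of order statistics, is non-increasing in $v$ and in $u$.

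The structural step, to be verified model by model in \Cref{tab:main}, is the following trichotomy. The pivot parameter $p$ is always one of $b,c,n$; it occurs in $R$, so $\lambda_{\min}$ is strictly increasing and linear in $p$; it occurs in $\mathit{df_2}$ precisely in the shape $\mathit{df_2}=C\cdot(p-1)$ with $C$ a product of the remaining parameters; and it occurs in neither $\mathit{df_1}$ nor $\kappa$. Every other parameter $q$ is of exactly one type: (a) inert, absent from $R$, $\mathit{df_1}$, $\mathit{df_2}$ and $\kappa$ — this happens exactly when $q$ appears only in a non-leading summand of $T$, for instance $q=n$ whenever $T$ has more than one term; (b) present in $R$, hence also as a plain factor of $\mathit{df_2}$, but absent from $\mathit{df_1}$ and $\kappa$; (c) $q\in\{v,u\}$, a plain factor of both $\mathit{df_1}$ and $\mathit{df_2}$ and an argument of $\kappa$. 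Establishing this trichotomy uniformly over all the models — above all, that $p$ is the unique parameter entering $\mathit{df_2}$ in the form $(\cdot-1)$ — is the bookkeeping core of the proof and the step I expect to be the most tedious, although it is elementary.

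Granting the trichotomy, part~\ref{item:pivotone} follows by comparing, for a common factor $t>1$ (equivalently, a common increase of the size $\prod(\text{parameters})$), the triple produced by $p\mapsto tp$ with the one produced by $q\mapsto tq$. Using $tp-1>t(p-1)$ one checks, case by case, that the pivot move yields $\lambda_{\min}$ at least as large, $\mathit{df_2}$ strictly larger, and $\mathit{df_1}$ no larger: for type~(a) the $q$-move changes nothing while the $p$-move strictly raises $\lambda_{\min}$; for type~(b) the two moves give the same $\lambda_{\min}$, but the $p$-move gives a strictly larger $\mathit{df_2}$; for type~(c) the $q$-move decreases $\lambda_{\min}$ through $\kappa$ and increases $\mathit{df_1}$. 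The three monotonicity facts then give strictly larger power for the pivot move.

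For part~\ref{item:power} I would keep the size fixed and use a compensating move: $q\mapsto sq$, $p\mapsto p/s$ with $0<s<1$. The trichotomy shows that such a move leaves $\mathit{df_1}$ non-increasing, strictly increases $\mathit{df_2}$, and leaves $\lambda_{\min}$ non-decreasing (strictly increasing in types~(a) and~(c), unchanged in type~(b)); hence $P$ strictly increases whenever some non-pivot parameter exceeds its minimum. Since the set of parameter tuples with a fixed product and all coordinates at least their minimal admissible values is compact and $P$ is continuous, the maximum is attained, and by the preceding sentence it is the corner at which every non-pivot parameter is minimal. Part~\ref{item:size} is the dual statement: writing $N_0$ for the infimal size attaining guaranteed power $\ge P_0$, part~\ref{item:power} applied at size $N_0$ shows that the corresponding corner configuration already has power $\ge P_0$, so $N_0$ is realized by letting only the pivot vary; and along that one-parameter family $P$ is continuous and strictly increasing in $p$ with $\lambda_{\min}\to\infty$ (hence $P\to1$), which both guarantees that every $P_0<1$ is reachable and identifies the optimal pivot value as the least one with power $\ge P_0$.
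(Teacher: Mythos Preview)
Your strategy is the same as the paper's: invoke the monotonicity of the power in $(\mathit{df_1},\mathit{df_2},\lambda)$ (the paper's \Cref{lemm:monotony}) and then verify, model by model from \Cref{tab:main}, that the pivot dominates every other parameter simultaneously on all three fronts. The paper records this verification as a small ``scoring table'' (which parameter least increases $\mathit{df_1}$, most increases $\mathit{df_2}$, most increases $\lambda$) and observes that the pivot is the sole winner; your trichotomy (a)/(b)/(c) is a more systematic packaging of the same bookkeeping.

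Two differences are worth flagging. First, you argue with $\lambda_{\min}$, i.e.\ with the guaranteed power, whereas the paper's proof (cf.\ \Cref{exam:pivotproof}) works with the actual $\lambda=RS/T$ from \Cref{thm:main}; in particular, your ``inert'' type~(a) parameters are not inert for the actual $\lambda$, since decreasing $T$ does raise $\lambda$. Your structure adapts immediately: the pivot is exactly the factor of $R$ absent from $T$, so a pivot move multiplies $\lambda$ by $t$, while a type-(a) move multiplies it by $(x+y)/(x+y/t)\le t$ (writing $T=x+y$ with $y$ the $q$-dependent part), and the strict gain in $\mathit{df_2}$ then finishes as before. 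Second, your treatment of parts~\ref{item:power} and~\ref{item:size} via the compensating move $q\mapsto sq,\ p\mapsto p/s$ and a compactness/continuity argument is appreciably more careful than the paper's, which dispatches~\ref{item:power} with ``start with minimal parameters and apply~\ref{item:pivotone}'' and asserts that~\ref{item:size} is equivalent to~\ref{item:power}. Your version makes explicit why a fixed-size optimum cannot sit in the interior, which the paper leaves implicit.
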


The proof of \Cref{thm:pivot} is in \ref{sec:proofs}.

\begin{example}
For the model $A \succ \boldsymbol{B} \succ \boldsymbol{C}$,
we have the following.
For given power requirements $P\geq P_0$,
the minimal sample size is obtained
by varying the parameter $b$, keeping $c$ and $n$ minimal.
For this and two other examples, see \Cref{tab:MinimumSizeExact}.
\end{example}

\begin{remark}  \label{rema:pivot}
\begin{enumerateroman}
\item
The ``pivot'' parameter in \Cref{thm:pivot},
defined in the second column of \Cref{tab:main},
can also be identified directly
from the model formula in the first column of the table.
That is, the ``pivot'' parameter is the number of levels
of the random factor nearest to $A$,
if we include the number $n$ of replicates as a virtual random factor,
and exclude factors that $A$ is nested in (labeled $U,V$).
For example, in $A \succ \boldsymbol{B} \succ \boldsymbol{C}$
the random factor $B$ is nearer to $A$ than the random factor $C$
or the virtual random factor of replicates;
and indeed the ``pivot'' parameter is $b$.
Inspired by related comments in \citet[p.\,23]{DONCASTER2007}
we interpret this heuristic observation
as a correlation between higher power effect and higher organizatorial level.

\item  \label{item:conj}
In \citet[p.\,73]{RASCH2011} it is observed that for the two-way model $A \times \boldsymbol{B}$
only the parameter $b$ should vary, but $n$ should be chosen as small as possible,
to achieve the minimum sample size.
For the model $\boldsymbol{V} \succ A$,
it is conjectured \citep[p.\,78]{RASCH2011} that only $n$ should vary,
but $v$ should be as small as possible, to achieve the minimal sample size.
These suggestions are motivated by inspecting the effect of the parameters
on the denominator d.f.\ $\mathit{df_2}$.
By \Cref{thm:pivot}\ref{item:size} we prove the conjecture and generalize these observations.
In fact, from \Cref{tab:main} the ``pivot'' parameter for $A \times \boldsymbol{B}$ is $b$,
and the ``pivot'' parameter for $\boldsymbol{V} \succ A$ is $n$.
Our proof works by inspecting the effect of the parameters not only on $\mathit{df_1}$ and $\mathit{df_2}$
but also on the noncentrality parameter~$\lambda$.
Note we assume that the parameters are real numbers,
for the subtleties of the transition to integer parameters see \Cref{sec:integer}.
\end{enumerateroman}
\end{remark}

\begin{table}
\caption{
Exemplifying the ``pivot'' effect (\Cref{thm:pivot}) for three models.
For each model,
the left table illustrates \Cref{thm:pivot}\ref{item:power}
and the right table illustrates \Cref{thm:pivot}\ref{item:size}.
In the left table for all parameter sets
with prespecified product $bcn=24$ ($vn=12$, respectively),
thus fixed sample size,
the noncentrality parameter~$\lambda$ and the power~$P$
are calculated, sorted by increasing power.
In the right table, for each of four power requirements
($P \geq 0.80,0.85, 0.90, 0.95$),
the parameter set with minimal sample size is calculated.
The parameters are the numbers $b, c, v$ of levels of the random factors
$B$, $C$, $V$, respectively,
and the number $n$ of replicates.
The number of levels of the fixed factor $A$ is $a=6$,
the minimum difference to be detected between the
smallest and the largest treatment effects is $\delta=1$,
and $\alpha = 0.05$.
The variance components are
$(\sigma_{\beta(\alpha)}^2,\ \sigma_{\gamma(\alpha\beta)}^2,\ \sigma^2)$
$= (1/18,1/9,1/6)$,
$(\sigma_{\alpha\gamma}^2,\ \sigma_{\beta(\alpha\gamma)}^2,\ \sigma^2,\ \sigma_{\gamma}^2)$
$= (1/18,1/9,1/6,*)$
and
$(\sigma^2,\ \sigma_{\nu}^2)$
$= (1/4,*)$, respectively.
Here, an asterisk indicates an arbitrary value
since the component is inactive, cf.\ \Cref{rema:ineq}\ref{item:mostfavorable}.}
\label{tab:MinimumSizeExact}

\begin{equation*}
\begin{array}{ll}
\multicolumn{2}{l}{\bullet\ \text{Model $A \succ \boldsymbol{B} \succ \boldsymbol{C}$,
pivot $\boldsymbol{b}$}}
\\

\begin{array}[t]{lllll}
\hline
(\boldsymbol{b},c,n) & df_1 & df_2 & \lambda & P \\
\hline
 (\boldsymbol{2},2,6) & 5 & 6 & 8. & 0.271516 \\
 (\boldsymbol{2},3,4) & 5 & 6 & 9.3913 & 0.314513 \\
 (\boldsymbol{2},4,3) & 5 & 6 & 10.2857 & 0.342042 \\
 (\boldsymbol{2},6,2) & 5 & 6 & 11.3684 & 0.375051 \\
 (\boldsymbol{3},2,4) & 5 & 12 & 11.3684 & 0.527472 \\
 (\boldsymbol{3},4,2) & 5 & 12 & 14.4 & 0.642402 \\
 (\boldsymbol{4},2,3) & 5 & 18 & 14.4 & 0.712478 \\
 (\boldsymbol{4},3,2) & 5 & 18 & 16.6154 & 0.781856 \\
 (\boldsymbol{6},2,2) & 5 & 30 & 19.6364 & 0.897849 \\
\hline
\end{array}
&
\begin{array}[t]{llllll}
\hline
P_{\text{requ.}} & (\boldsymbol{b},c,n) & df_1 & df_2 & \lambda & P \\
\hline
 0.8 & (\boldsymbol{5},2,2) & 5 & 24 & 16.3636 & 0.808263 \\
 0.85 & (\boldsymbol{6},2,2) & 5 & 30 & 19.6364 & 0.897849 \\
 0.9 & (\boldsymbol{7},2,2) & 5 & 36 & 22.9091 & 0.948655 \\
 0.95 & (\boldsymbol{8},2,2) & 5 & 42 & 26.1818 & 0.97543 \\
\hline
\end{array}

\\
\\

\multicolumn{2}{l}{\bullet\ \text{Model $(A \times \boldsymbol{C}) \succ \boldsymbol{B}$,
pivot $\boldsymbol{c}$}}

\\

\begin{array}[t]{lllll}
\hline
(b,\boldsymbol{c},n) & df_1 & df_2 & \lambda & P \\
\hline
 (2,\boldsymbol{2},6) & 5 & 5 & 8. & 0.241845 \\
 (3,\boldsymbol{2},4) & 5 & 5 & 9.3913 & 0.278819 \\
 (4,\boldsymbol{2},3) & 5 & 5 & 10.2857 & 0.302586 \\
 (6,\boldsymbol{2},2) & 5 & 5 & 11.3684 & 0.331214 \\
 (2,\boldsymbol{3},4) & 5 & 10 & 11.3684 & 0.4915 \\
 (4,\boldsymbol{3},2) & 5 & 10 & 14.4 & 0.602299 \\
 (2,\boldsymbol{4},3) & 5 & 15 & 14.4 & 0.684104 \\
 (3,\boldsymbol{4},2) & 5 & 15 & 16.6154 & 0.754655 \\
 (2,\boldsymbol{6},2) & 5 & 25 & 19.6364 & 0.885509 \\
\hline
\end{array}
&
\begin{array}[t]{llllll}
\hline
P_{\text{requ.}} & (b,\boldsymbol{c},n) & df_1 & df_2 & \lambda & P \\
\hline
 0.8 & (2,\boldsymbol{6},2) & 5 & 25 & 19.6364 & 0.885509 \\
 0.85 & (2,\boldsymbol{6},2) & 5 & 25 & 19.6364 & 0.885509 \\
 0.9 & (2,\boldsymbol{7},2) & 5 & 30 & 22.9091 & 0.941747 \\
 0.95 & (2,\boldsymbol{8},2) & 5 & 35 & 26.1818 & 0.971837 \\
\hline
\end{array}

\\
\\

\multicolumn{2}{l}{\bullet\ \text{Model $\boldsymbol{V} \succ A$,
pivot $\boldsymbol{n}$}} \\

\begin{array}[t]{lllll}
\hline
(v,\boldsymbol{n}) & df_1 & df_2 & \lambda & P \\
\hline
 (6,\boldsymbol{2}) & 30 & 36 & 4.8 & 0.109714 \\
 (4,\boldsymbol{3}) & 20 & 48 & 7.2 & 0.210406 \\
 (3,\boldsymbol{4}) & 15 & 54 & 9.6 & 0.351949 \\
 (2,\boldsymbol{6}) & 10 & 60 & 14.4 & 0.659852 \\
\hline
\end{array}
&
\begin{array}[t]{llllll}
\hline
P_{\text{requ.}} & (v,\boldsymbol{n}) & df_1 & df_2 & \lambda & P \\
\hline
 0.8 & (2,\boldsymbol{8}) & 10 & 84 & 19.2 & 0.829324 \\
 0.85 & (2,\boldsymbol{9}) & 10 & 96 & 21.6 & 0.884471 \\
 0.9 & (2,\boldsymbol{10}) & 10 & 108 & 24. & 0.923847 \\
 0.95 & (2,\boldsymbol{11}) & 10 & 120 & 26.4 & 0.951 \\
\hline
\end{array}
\end{array}
\end{equation*}
\end{table}

The next example illustrates the minimal sample size computation
for ANOVA models, based on our main results.

\begin{example}  \label{exam:corr}
\begin{enumerateroman}
\item  \label{item:twofact}
Consider the model $A \times \boldsymbol{B}$.
Let $\alpha=0.05$, let $a=6$, let $\delta = \sigma_y$
and consider the power requirement $P\geq 0.9$.
From \Cref{thm:pivot} we observe
that the minimal design has $n=2$
and only the ``pivot'' parameter $b$ is relevant.
By \Cref{thm:main} and \Cref{thm:ineq} we obtain
that to achieve $P\geq 0.9$, the minimal design is $(b,n)=(35,2)$,
with size $abcn=420$ and power $P = 0.909083$.

\item  \label{item:threefact}
Consider the model
$A \times \boldsymbol{B} \times \boldsymbol{C}$ and assume
$\sigma_{\alpha\gamma}^2=0$.
This model is equivalent to
the exact $F$\mbox-test models $(A \times \boldsymbol{B}) \succ \boldsymbol{C}$
and $A \times (\boldsymbol{B} \succ \boldsymbol{C})$,
cf.~\Cref{lemm:special} below.
Let $\alpha=0.05$, let $a=6$, let $\delta = \sigma_y$
and consider the power requirement $P\geq 0.9$.
By \Cref{thm:pivot} we obtain
that the minimal design has $c=n=2$
and only the ``pivot'' parameter $b$ is relevant.
By \Cref{thm:main} and \Cref{thm:ineq} we obtain
that to achieve $P\geq 0.9$,
the minimal design is $(b,c,n)=(35,2,2)$,
with size $abcn=840$ and power $P = 0.909083$.
\end{enumerateroman}
\end{example}

\begin{remark}
In \Cref{exam:corr} the power $P = 0.909083$ for $(b,c,n)=(35,2,2)$
in \ref{item:threefact} is the same as the power for $(b,n)=(35,2)$
in \ref{item:twofact}.
This coincidence is implied by the fact that
\ref{item:twofact} and \ref{item:threefact} have the same d.f.\ and
in the worst case of~\ref{item:twofact} and \ref{item:threefact}
the total variance is consumed entirely by $\sigma_{\alpha\beta}^2$,
cf.\ \Cref{rema:ineq}\ref{item:leastfavorable}.
\end{remark}

\section{Models with approximate \texorpdfstring{$\boldsymbol{F}$\mbox-test}{F-test}}  \label{sec:approx}

For the two models
\begin{equation}  \label{eq:approx}
A \times \boldsymbol{B} \times \boldsymbol{C}
\qquad\text{and}\qquad
(A \succ \boldsymbol{B}) \times \boldsymbol{C},
\end{equation}
an exact $F$\mbox-test does not exist.
Approximate $F$\mbox-tests can be obtained
by Satterthwaite's approximation
that goes back to \citet{BEHRENS1929}, \citet{WELCH1938}, \citet{WELCH1947}
and generalized by \citet{SATTERTHWAITE1946},
see \citet[Appendix\,K]{SAHAI2000}.
The details of the approximate $F$\mbox-tests for the models in \eqref{eq:approx}
are in \citet[Sec.\,3.4.1.3 and Sec.\,3.4.4.5]{RASCH2011}.
Satterthwaite's approximation in a similar or different form
also occurs, for example, in
\citet{DAVENPORT1972},
\citet{DAVENPORT1973},
\citet[pp.\,40--41]{DONCASTER2007},
\citet{HUDSON1968},
\citet{LORENZEN2019},
\citet{RASCH2012},
\citet{WANG2005},
also denoted as quasi-$F$\mbox-test \citep{MYERS2010}.

The approximate $F$-test d.f.\ involve mean squares to be simulated.
To approximate the power of the test,
simulate data such that $H_0$ is false and compute the rate of rejections.
The rate approximates the power of the test.
In the middle plot of \Cref{fig:pwr} we give an example of the power behavior
for the approximate $F$\mbox-test model $(A \succ \boldsymbol{B}) \times \boldsymbol{C}$.
The plot shows that the ``pivot'' effect for exact $F$\mbox-tests (\Cref{thm:pivot})
does not generalize to approximate $F$\mbox-tests.

The next lemma rephrases observations in \citet{RASCH2011,RASCH2012}.
It allowed us to avoid approximations but use exact $F$\mbox-test computations for
the left and the right plots of \Cref{fig:pwr}.

\begin{lemma}  \label{lemm:special}
The following special cases of~\eqref{eq:approx}
are equivalent to exact $F$\mbox-test models,
in the sense of identical d.f.\ and noncentrality parameters.
\begin{enumerateroman}
\item
If in the model $A \times \boldsymbol{B} \times \boldsymbol{C}$
we have $\sigma_{\alpha\gamma}^2=0$, then it is equivalent to
$(A \times \boldsymbol{B}) \succ \boldsymbol{C}$
and $A \times (\boldsymbol{B} \succ \boldsymbol{C})$.

\item  \label{item:mixed}
If in the model $(A \succ \boldsymbol{B}) \times \boldsymbol{C}$
we have $\sigma_{\beta(\alpha)}^2=0$, then it is equivalent to
$(A \times \boldsymbol{C}) \succ \boldsymbol{B}$
and $A \times (\boldsymbol{C} \succ \boldsymbol{B})$;
while if $\sigma_{\alpha\gamma}^2=0$, then it is equivalent
to $A \succ \boldsymbol{B} \succ \boldsymbol{C}$.
\end{enumerateroman}
\end{lemma}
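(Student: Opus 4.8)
The plan is to reduce everything to the expected mean squares (EMS). For a candidate $F$-test $\boldsymbol F_A = \MS_A/\MS_X$ of $H_0$, the numerator and denominator degrees of freedom are those of $\MS_A$ and $\MS_X$, the test is \emph{exact} precisely when $E(\MS_X)$ equals $E(\MS_A)$ with its $\myssa$-term deleted, and in that case the noncentrality parameter is $\lambda = bcn\,\myssa\,/\,E(\MS_X)$. So I would first record the EMS of the two models in \eqref{eq:approx}, note that in general no mean square $\MS_X$ of the model has the required expectation (which is why only Satterthwaite's approximate test exists), and then observe that each of the stated vanishing conditions deletes exactly the offending term from $E(\MS_A)$, so that the reduced expectation is realized by a mean square already present in the model. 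Finally I would read off $\mathit{df_1},\mathit{df_2},\lambda$ and match them, via the variance-component identification used in the caption of \Cref{tab:main}, against the entries that \Cref{thm:main} assigns to the claimed exact-$F$-test models.

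Carrying this out, for $A\times\boldsymbol B\times\boldsymbol C$ one has $\mathit{df_1}=a-1$ and
\[
E(\MS_A) = \sigma^2 + n\,\sigma_{\alpha\beta\gamma}^2 + bn\,\sigma_{\alpha\gamma}^2 + cn\,\sigma_{\alpha\beta}^2 + \frac{bcn}{a-1}\,\myssa,
\]
whereas the natural denominator $\MS_{AB}$, with $(a-1)(b-1)$ degrees of freedom, has $E(\MS_{AB}) = \sigma^2 + n\,\sigma_{\alpha\beta\gamma}^2 + cn\,\sigma_{\alpha\beta}^2$; the only obstruction is the term $bn\,\sigma_{\alpha\gamma}^2$, so imposing $\sigma_{\alpha\gamma}^2=0$ makes $\boldsymbol F_A=\MS_A/\MS_{AB}$ exact with $\mathit{df_2}=(a-1)(b-1)$ and $\lambda = bcn\,\myssa\,/\,(\sigma^2 + n\,\sigma_{\alpha\beta\gamma}^2 + cn\,\sigma_{\alpha\beta}^2)$, which is precisely the $RS/T$ of the rows $(A\times\boldsymbol B)\succ\boldsymbol C$ and $A\times(\boldsymbol B\succ\boldsymbol C)$ of \Cref{tab:main} after clearing denominators. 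The model $(A\succ\boldsymbol B)\times\boldsymbol C$ is treated the same way: here
\[
E(\MS_A) = \sigma^2 + n\,\sigma_{\beta\gamma(\alpha)}^2 + cn\,\sigma_{\beta(\alpha)}^2 + bn\,\sigma_{\alpha\gamma}^2 + \frac{bcn}{a-1}\,\myssa,
\]
and the two conditions in the statement each delete one term and leave a legitimate denominator: $\sigma_{\beta(\alpha)}^2=0$ leaves $E(\MS_{AC})=\sigma^2+n\,\sigma_{\beta\gamma(\alpha)}^2+bn\,\sigma_{\alpha\gamma}^2$ with $(a-1)(c-1)$ degrees of freedom, matching $(A\times\boldsymbol C)\succ\boldsymbol B$ and $A\times(\boldsymbol C\succ\boldsymbol B)$ (relabel $B\leftrightarrow C$ in \Cref{tab:main}), while $\sigma_{\alpha\gamma}^2=0$ leaves $E(\MS_{B\,\text{in}\,A})=\sigma^2+n\,\sigma_{\beta\gamma(\alpha)}^2+cn\,\sigma_{\beta(\alpha)}^2$ with $a(b-1)$ degrees of freedom, matching $A\succ\boldsymbol B\succ\boldsymbol C$ (cf.\ \Cref{exam:denominator}).

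The arithmetic, once the EMS are in hand, is routine; the delicate point is the bookkeeping that turns ``identical noncentrality parameter'' into a precise statement, namely the correspondence between the variance components of the fully crossed parametrization and those of the nested models in \Cref{tab:main}. Concretely, one verifies that under the vanishing hypothesis the crossed components $\sigma_{\alpha\beta\gamma}^2$ and $\sigma_{\beta\gamma(\alpha)}^2$ occupy the same slot in the denominator of $\lambda$ as the nested components $\sigma_{\gamma(\alpha\beta)}^2$ and $\sigma_{\alpha\gamma(\beta)}^2$ of the corresponding rows of \Cref{tab:main}, and that the fixed-effect term $\myssa$ enters identically on both sides (the $\gamma$-type main effects and the $B\times C$-type interactions never contribute to $E(\MS_A)$, so their variances are merely absorbed into the inactive part of $\sigma_y^2$, consistently with \Cref{rema:ineq}). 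It should also be recorded that the EMS used above are the ones valid under the compound-symmetry assumption adopted throughout, so that the denominators, the degrees of freedom, and hence the asserted equivalences are consistent with the derivation underlying \Cref{thm:main}.
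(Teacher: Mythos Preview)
Your proposal is correct and follows exactly the approach indicated by the paper's one-line proof (``The equivalences follow from inspecting the d.f.\ and the noncentrality parameter''); you simply carry out that inspection in full, writing down the expected mean squares for the two approximate-$F$ models, showing which term the vanishing hypothesis removes, and matching the resulting $\mathit{df_1},\mathit{df_2},\lambda$ against the relevant rows of \Cref{tab:main}. Your use of $\lambda = bcn\,\myssa/E(\MS_X)$ is precisely the content of \Cref{lemm:noncent} as applied in the proof of \Cref{thm:main}, and your handling of the variance-component identifications (e.g., $\sigma_{\beta\gamma(\alpha)}^2$ playing the role of $\sigma_{\alpha,\beta,\gamma}^2$) is consistent with the simplification noted in the caption of \Cref{tab:main}.
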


\begin{proof}
The equivalences follow from inspecting the d.f.\ and the noncentrality parameter.
\end{proof}

\begin{remark}
To look up in \Cref{tab:main}
the first case of \Cref{lemm:special}\ref{item:mixed},
swap the factor names $B\leftrightarrow C$ first.
\end{remark}

\begin{figure}[!htbp]
\centering
\includegraphics[width=\textwidth]{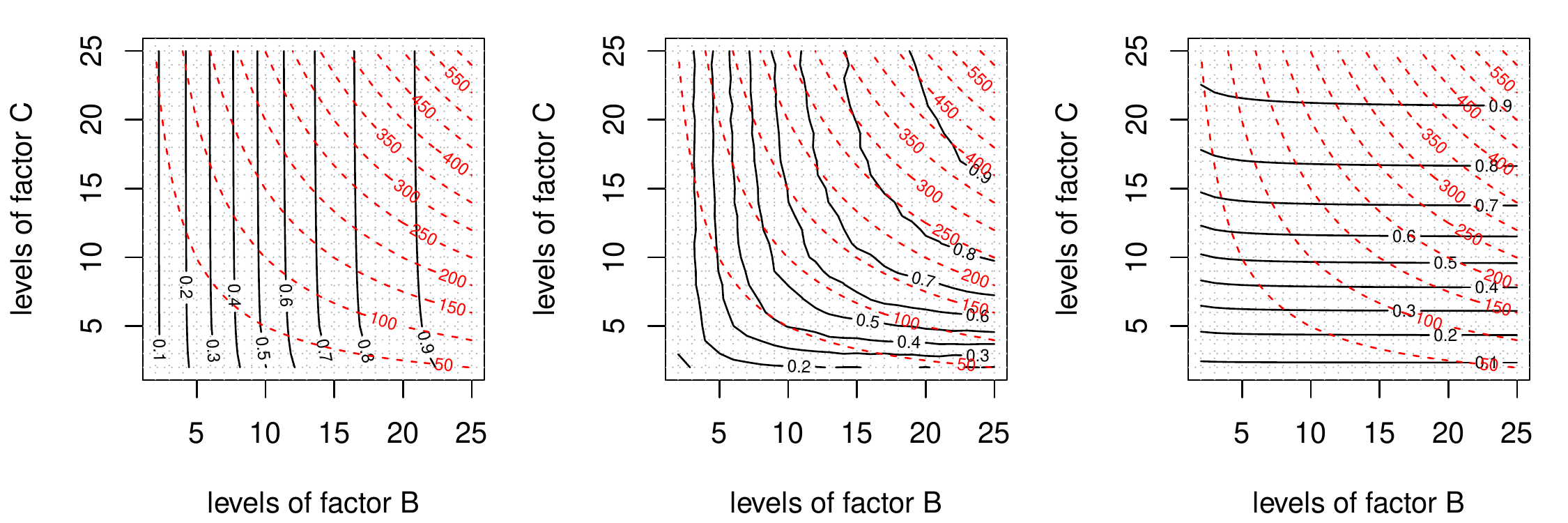}
\caption{
Power and size for the mixed model $(A \succ \boldsymbol{B}) \times \boldsymbol{C}$,
for $a=6$, $\alpha=0.05$, $\delta=5$, and
three variance component assignments
$(\sigma_{\beta(\alpha)}^2,\ \sigma_{\gamma}^2,\ 
\sigma_{\alpha\gamma}^2,\ \sigma_{\beta\gamma(\alpha)}^2,\ \sigma^2)$
$= (10,5,0,5,5)$,
$(5,5,5,5,5)$,
$(0,5,10,5,5)$, from left to right.
Each contour plots shows the guaranteed power
$P_{\min} = (1-\beta)_{\min}$ (solid curves) overlaid
with the size factor $b\cdot c$ (red, dashed hyperbolas)
as functions of $b,c\leq 25$, for fixed $n=2$.
By \Cref{lemm:special}\ref{item:mixed} the left model is equivalent
to $A \succ \boldsymbol{B} \succ \boldsymbol{C}$,
such that by \Cref{thm:pivot} the ``pivot'' parameter is $b$.
The middle plot is an
approximate $F$\mbox-test model
(the power is approximated by 10\,000 simulations),
there is no ``pivot'' effect.
The right model is equivalent
to $(A \times \boldsymbol{C}) \succ \boldsymbol{B}$,
the ``pivot'' parameter is $c$.
}
\label{fig:pwr}
\end{figure}

\section{Real versus integer parameters}  \label{sec:integer}

The ``pivot'' effect for the minimum sample size
described in \Cref{thm:pivot}\ref{item:size}
is formulated with the assumption that the parameters are real numbers.
The effect also occurs in most practical examples,
where the parameters are integers.
But we constructed the following example to point out
that for integer parameters the ``pivot'' effect is not a granted fact.

\begin{example}  \label{exam:integer}
Consider the two-way model $A \times \boldsymbol{B}$
with $a = 15$, $\alpha = 0.1$, $\delta = 7$,
$( \sigma_{\beta(\alpha)}^2, \sigma^2 ) = (0.01, 8)$,
and required power $P \geq 0.9$.
Then for real $b,n \geq 2$,
the minimum sample size obtained by \Cref{thm:pivot}\ref{item:size}
occurs for $(b,n)=(4.019937,2)$,
where $P=0.9$.
For integers $b,n = 2,3,\dots$,
the minimum sample size occurs for $(b,n)=(3,3)$,
where $P=0.902873$.
Thus in this example the ``pivot'' effect
is obstructed if we switch from real numbers to integers.
In more realistic examples this obstruction does not occur.
\end{example}

\begin{remark}
While \Cref{exam:integer} shows that the transition to integers
can obstruct (if by an unrealistic example) the ``pivot'' effect,
we remark that the obstruction is limited,
that is, the real number computation has the following valid implication
for the integer result.
The real number minimum at $(b,n)=(4.019937,2)$,
readily computed by using \Cref{thm:pivot}\ref{item:size},
immediately implies that the integer minimum size occurs at $(b,n)$
with $b\cdot n$ between $4.019937\cdot 2$ and $5\cdot 2$, that is,
\begin{equation*}
b\cdot n \in \{9,10\},
\end{equation*}
in fact in the example $b\cdot n = 9$.
A similar implication holds for all models in \Cref{tab:main}.
\end{remark}

\section{Conclusions}

We determine the noncentrality parameter of the exact $F$\mbox-test
for balanced factorial ANOVA models.
From a sharp lower bound for the noncentrality parameter
we obtain the power that can be guaranteed
in the least favorable case.
These results allow us to compute the minimal sample size,
but we also provide a structural result for the minimal sample size.
The structural result is formulated as a ``pivot'' effect,
which means that one of the factors is more relevant than
the others, for the power and thus for the minimum sample size.

\section*{Acknowledgments}

The authors are grateful to Karl Moder for helpful discussions and comments.
We also thank the reviewer for useful comments.

\section*{ORCID}

Bernhard Spangl  \myorcidlong{https://orcid.org/0000-0001-6222-2408} \\
Norbert Kaiblinger  \myorcidlong{https://orcid.org/0000-0001-6280-5929} \\
Peter Ruckdeschel  \myorcidlong{https://orcid.org/0000-0001-7815-4809} \\
Dieter Rasch  \myorcidlong{https://orcid.org/0000-0001-9324-9910}

\numberwithin{equation}{section}

\renewcommand{\thesection}{Appendix A}
\section{Proofs}  \label{sec:proofs}
\renewcommand{\thesection}{A}

We include a short proof of the formula
for the noncentrality parameter in \citet[p.\,151]{LINDMAN1992},
formulated here in a more general form.

\begin{lemma}  \label{lemm:noncent}
Let a test statistic $\boldsymbol{F}$ have
a noncentral $F$\mbox-distribution
with numerator and denominator d.f.\ $df_1$ and $df_2$, respectively,
written as $\boldsymbol{F} = \boldsymbol{Z}_1 / \boldsymbol{Z}_2$, with $q\neq0$,
\begin{equation*}
\begin{aligned}
\boldsymbol{Z}_1 & = q \cdot \boldsymbol{X}_1/df_1, \\
\boldsymbol{Z}_2 & = q \cdot \boldsymbol{X}_2/df_2,
\end{aligned}
\qquad
\begin{aligned}
\boldsymbol{X}_1 & \sim \chi^2(df_1,\lambda), \\
\boldsymbol{X}_2 & \sim \chi^2(df_2,0),
\end{aligned}
\end{equation*}
and $\boldsymbol{X}_1,\boldsymbol{X}_2$ stochastically independent.
Then the noncentrality parameter $\lambda$ satisfies
\begin{equation*}
\lambda = \mathit{df_1} \cdot \left ( \frac{E(\boldsymbol{Z}_1)}{E(\boldsymbol{Z}_2)} - 1 \right ) .
\end{equation*}
\end{lemma}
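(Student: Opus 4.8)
The plan is to reduce the statement to the elementary first-moment formulas for the (noncentral) chi-square distribution, so that essentially no real computation is required. Recall that if $\boldsymbol{X}_1 \sim \chi^2(df_1,\lambda)$ then $E(\boldsymbol{X}_1) = df_1 + \lambda$, and if $\boldsymbol{X}_2 \sim \chi^2(df_2,0)$ then $E(\boldsymbol{X}_2) = df_2$; both are standard and can be cited from \citet{JOHNSON1995}. First I would substitute these into the definitions of $\boldsymbol{Z}_1$ and $\boldsymbol{Z}_2$, obtaining $E(\boldsymbol{Z}_1) = q\,(df_1+\lambda)/df_1$ and $E(\boldsymbol{Z}_2) = q$.

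Next, since $q\neq 0$ the ratio $E(\boldsymbol{Z}_1)/E(\boldsymbol{Z}_2)$ is well defined and equals $(df_1+\lambda)/df_1 = 1 + \lambda/df_1$; note that the nuisance constant $q$ cancels, which explains why the final formula does not involve $q$. Rearranging this identity gives $\lambda = df_1\,\bigl(E(\boldsymbol{Z}_1)/E(\boldsymbol{Z}_2) - 1\bigr)$, which is the claim.

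The only point that requires any care, and the closest thing here to an obstacle, is to keep the distinction between $E(\boldsymbol{Z}_1)/E(\boldsymbol{Z}_2)$ and $E(\boldsymbol{Z}_1/\boldsymbol{Z}_2) = E(\boldsymbol{F})$ firmly in view: the lemma concerns the ratio of the two marginal expectations, not the expectation of the ratio, so no appeal to Jensen's inequality and no use of the independence of $\boldsymbol{X}_1,\boldsymbol{X}_2$ beyond the joint-distribution definition of the noncentral $F$ is needed. The independence hypothesis is recorded only to guarantee that $\boldsymbol{F}$ is genuinely noncentral $F$-distributed with the stated degrees of freedom; the moment identity itself rests solely on the two marginal means $E(\boldsymbol{X}_1)=df_1+\lambda$ and $E(\boldsymbol{X}_2)=df_2$.
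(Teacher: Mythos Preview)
Your proof is correct and follows essentially the same route as the paper: compute $E(\boldsymbol{Z}_1)=q(df_1+\lambda)/df_1$ and $E(\boldsymbol{Z}_2)=q$ from the chi-square means, cancel $q$, and rearrange. Your closing remark distinguishing $E(\boldsymbol{Z}_1)/E(\boldsymbol{Z}_2)$ from $E(\boldsymbol{F})$ even anticipates the paper's own remark immediately following its proof.
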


\begin{proof}
Since $E(\boldsymbol{X}_1) = df_1 + \lambda$ and $E(\boldsymbol{X}_2) = df_2$,
we obtain $E(\boldsymbol{Z}_1) = q \cdot (1 + \lambda / df_1)$ and $E(\boldsymbol{Z}_2) = q$.
Hence,
\begin{equation}
E(\boldsymbol{Z}_1) / E(\boldsymbol{Z}_2)= 1 + \lambda / df_1,
\end{equation}
which implies the expression for $\lambda$ in the lemma.
\end{proof}

\begin{remark}
Jensen's equality implies
$E(\boldsymbol{Z}_1)/E(\boldsymbol{Z}_2) < E(\boldsymbol{Z}_1/\boldsymbol{Z}_2) = E(\boldsymbol{F})$.
For $E(\boldsymbol{F})$, see \citet[formula (30.3a)]{JOHNSON1995}.
\end{remark}

The next lemma summarizes
monotonicity properties of the noncentral $F$-distribution
from \citet{GHOSH1973},
listed in \citet[Sec.\,16.4.2]{HOCKING2003},
see also \citet[Theorem 4.3]{FINNER1997} with a sharper statement.
Recall that for $0 \le \gamma \le 1$,
we let $F_{\mathit{df_1}, \mathit{df_2}; \gamma}$ denote the
$\gamma$-quantile of the central $F$-distribution with $\mathit{df_1}$
and $\mathit{df_2}$ degrees of freedom.

\begin{lemma} \label{lemm:monotony}
Let $F$ be distributed according
to the noncentral $F$-distribution $F_{\mathit{df_1}, \mathit{df_2}}^\lambda$
with noncentrality parameter $\lambda$.
Then referring to the probability
$\mathbb{P}(F > F_{\mathit{df_1}, \mathit{df_2}; \gamma})$ as power,
we have if $\mathit{df_1}$ decreases
and $\mathit{df_2}$, $\lambda$ increase, then the power increases.
That is, we have the implication
\begin{equation*}
\left.  
\begin{aligned}
\mathit{df_1} & \ge \mathit{df_1'} \\
\mathit{df_2} & \le \mathit{df_2'} \\
\lambda & \le \lambda'
\end{aligned} \quad \right \}
\qquad \Rightarrow \qquad
  \mathbb{P}(F > F_{\mathit{df_1}, \mathit{df_2}; \gamma}) \le
  \mathbb{P}(F' > F_{\mathit{df_1'}, \mathit{df_2'}; \gamma})
\end{equation*}
with $F \sim F_{\mathit{df_1}, \mathit{df_2}}^\lambda$ and
$F' \sim F_{\mathit{df_1'}, \mathit{df_2'}}^{\lambda'}$~.
\end{lemma}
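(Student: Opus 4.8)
The plan is to factor the three–parameter statement into three successive one–parameter steps and then quote the classical results for the two hard ones. Since the conclusion is an inequality between powers and $\le$ is transitive, it suffices to prove the implication along the chain
\[
(\mathit{df_1}, \mathit{df_2}, \lambda) \;\longrightarrow\; (\mathit{df_1'}, \mathit{df_2}, \lambda) \;\longrightarrow\; (\mathit{df_1'}, \mathit{df_2'}, \lambda) \;\longrightarrow\; (\mathit{df_1'}, \mathit{df_2'}, \lambda'),
\]
in which exactly one parameter changes at each arrow, in the admissible direction ($\mathit{df_1}$ down, $\mathit{df_2}$ up, $\lambda$ up). First I would dispose of the last arrow, the $\lambda$–step, where the cutoff $F_{\mathit{df_1'},\mathit{df_2'};\gamma}$ stays put. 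Writing $\boldsymbol F^\lambda = (\boldsymbol X_1/\mathit{df_1'})/(\boldsymbol X_2/\mathit{df_2'})$ with $\boldsymbol X_1 \sim \chi^2(\mathit{df_1'},\lambda)$ independent of $\boldsymbol X_2 \sim \chi^2(\mathit{df_2'},0)$, and using the Poisson mixture $\chi^2(\mathit{df_1'},\lambda) \stackrel{d}{=} \chi^2(\mathit{df_1'} + 2K)$ with $K \sim \mathrm{Poisson}(\lambda/2)$, one sees that $\boldsymbol X_1$ is stochastically increasing in $\lambda$ (because $K$ is, and $\chi^2$ is stochastically increasing in its degrees of freedom). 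Conditioning on $\boldsymbol X_2$ and integrating, $\boldsymbol F^\lambda$ is stochastically increasing in $\lambda$, so $\mathbb P(\boldsymbol F^\lambda > c)$ is nondecreasing in $\lambda$ for the fixed cutoff $c = F_{\mathit{df_1'},\mathit{df_2'};\gamma}$.

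The remaining two arrows — decreasing $\mathit{df_1}$ and increasing $\mathit{df_2}$ — are the substance of the lemma, and this is where I expect the main obstacle to lie: now the cutoff $F_{\mathit{df_1},\mathit{df_2};\gamma}$ itself depends on the parameter being moved, so stochastic ordering alone is not enough and the shift of the distribution must be weighed against the shift of the quantile. The clean route is to pass to the noncentral beta: with $\boldsymbol W = \boldsymbol X_1/(\boldsymbol X_1+\boldsymbol X_2)$ one has $\{\boldsymbol F^\lambda > c\} = \{\boldsymbol W > w\}$ for the matching cutoff $w$, and for $c = F_{\mathit{df_1},\mathit{df_2};\gamma}$ the power becomes the tail of a $\mathrm{Beta}(\mathit{df_1}/2,\mathit{df_2}/2,\lambda)$ variable at the central-beta $\gamma$–quantile. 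Equivalently, via the same Poisson mixture the power is $\sum_{k\ge0} e^{-\lambda/2}(\lambda/2)^k/k!\cdot \bar G_{\mathit{df_1}+2k,\,\mathit{df_2}}\!\big(\tfrac{\mathit{df_1}}{\mathit{df_1}+2k}\,F_{\mathit{df_1},\mathit{df_2};\gamma}\big)$, where $\bar G$ is the central $F$ survival function and the $k=0$ term is the constant $1-\gamma$; the required monotonicity of this quantity in $\mathit{df_1}$ and in $\mathit{df_2}$ is exactly the content of Ghosh's monotonicity theorems \citep{GHOSH1973} (tabulated in \citet[Sec.\,16.4.2]{HOCKING2003}), and can alternatively be derived from the log-concavity of the $F$ and beta densities as in \citet[Theorem 4.3]{FINNER1997}. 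I would invoke this rather than reprove it.

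Composing the three steps along the chain yields the stated implication. As a consistency check one can look at the limiting regime: letting $\mathit{df_2}\to\infty$ turns $\boldsymbol F^\lambda$ into $\chi^2(\mathit{df_1},\lambda)/\mathit{df_1}$ and the cutoff into $\chi^2_{\mathit{df_1};\gamma}/\mathit{df_1}$, so the $\mathit{df_2}$–monotonicity says the finite–sample $F$–test never outperforms its ``$\mathit{df_2}=\infty$'' limit, while the $\mathit{df_1}$–direction reflects the familiar fact that concentrating the same noncentrality on fewer numerator degrees of freedom raises power. Apart from the appeal to the classical $\mathit{df_1}$– and $\mathit{df_2}$–monotonicity results, everything in the argument is bookkeeping.
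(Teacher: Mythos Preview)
Your proposal is correct and follows essentially the same route as the paper: decompose into three one-parameter steps and invoke \citet{GHOSH1973} for the $\mathit{df_1}$- and $\mathit{df_2}$-monotonicity (the paper cites Theorems~6 and~5 there, respectively). The only difference is cosmetic: for the $\lambda$-step you supply the Poisson-mixture argument directly, whereas the paper simply cites \citet{WITTING1985} and \citet{BHATTACHARYA2016}.
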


\begin{proof}
For varying $\mathit{df_1}$, see \citet[Thm.\,6]{GHOSH1973}.
For varying $\mathit{df_2}$, apply \citet[Thm.\,5]{GHOSH1973} with $\lambda_0=0$.
For varying $\lambda$,
see \citet[p.\,219, Satz~2.36(b)]{WITTING1985}
or \citet[p.\,53, Exercise~2.9]{BHATTACHARYA2016}.
\end{proof}

\begin{proof}[\textbf{Proof of {\upshape\Cref{thm:main}}}]
We prove the result only for the model
$A \succ \boldsymbol{B} \succ \boldsymbol{C}$,
the proofs for the other models are analogous.
In the expected mean squares table \citep[p.\,100, Table~3.15]{RASCH2011}
the two expressions
\begin{equation}  \label{eq:numdenom}
\begin{aligned}
E(\MS_A)
& = \sigma^2 + n \sigma_{\gamma(\alpha\beta)}^2 + cn \sigma_{\beta(\alpha)}^2
+ \frac{bcn}{a-1} \myssa \\
E(\MS_{B \, \text{in} \, A})
& = \sigma^2 + n \sigma_{\gamma(\alpha\beta)}^2 + cn \sigma_{\beta(\alpha)}^2
\end{aligned}
\end{equation}
are equal under the null hypothesis $H_0$ of no $A$\mbox-effects.
Hence, $H_0$ can be tested by the exact $F$\mbox-test
\begin{equation}  \label{eq:teststat}
\boldsymbol{F}_{A} = \frac{\MS_{A}}{\MS_{B \, \text{in} \, A}} ~,
\end{equation}
which under $H_{0}$ is central $F$\mbox-distributed,
in general noncentral $F$-distributed.
From the ANOVA table \citep[p.\,91, Table~3.10]{RASCH2011}
the numerator and denominator d.f.\ are
$\mathit{df_1} = a-1$ and $\mathit{df_2} = a (b-1)$, respectively.
By \Cref{lemm:noncent} the noncentrality parameter $\lambda$ is thus
\begin{equation}  \label{eq:lam}
\begin{aligned}
\lambda
& = \frac{bcn \myssa}{\sigma^2 + n \sigma_{\gamma(\alpha\beta)}^2 + cn \sigma_{\beta(\alpha)}^2}
= b \cdot \frac{\myssa}{\sigma_{\beta(\alpha)}^2 + \frac{1}{c} \sigma_{\gamma(\alpha\beta)}^2 + \frac{1}{cn}\sigma^2} ~.
\end{aligned}
\end{equation}
\end{proof}

\begin{remark}  \label{rema:denominator}
\begin{enumerateroman}
\item
The formula \eqref{eq:lam} allows us to point out
the difference of our results
compared to the previous literature
\citep[p.58--59]{RASCH2011}.
In fact, the expression $bcn$ in the numerator at the left-hand side of \eqref{eq:lam}
coincides with the expression $C$ in \citet[Table~3.2]{RASCH2011},
but note that the denominator is distinct.
The exact expression for $\lambda$ in \eqref{eq:lam} has the sum of variance components
$\sigma_y^2 = \sigma^2 + \sigma_{\gamma(\alpha\beta)}^2 + \sigma_{\beta(\alpha)}^2$
replaced by the linear combination
$\sigma^2 + n \sigma_{\gamma(\alpha\beta)}^2 + cn \sigma_{\beta(\alpha)}^2$,
see also \citet{RASCH2020a}.
The fourth author and Rob Verdooren have acknowledged our results
and update their available R-programs accordingly,
note in \citet{RASCH2020a} some citation numbers have been mixed up.
To reproduce the examples of the present paper,
R-code is available from the first author.

\item
The transformation from the left-hand side to the right-hand side in \eqref{eq:lam}
shifts the attention from the product of parameters $bcn$ to the single parameter $b$.
This observation is the key to our general ``pivot'' effect result (\Cref{thm:pivot}).

\item
To verify the details of \Cref{tab:main}
note that the expected mean squares table entries used in the proof of \Cref{thm:main} depend on the factors being fixed or random.
\end{enumerateroman}
\end{remark}

\begin{proof}[\textbf{Proof of {\upshape\Cref{thm:ineq}}}]
\begin{enumerateroman}
\item
As above we prove the result for the model $A \succ \boldsymbol{B} \succ \boldsymbol{C}$.
Since
\begin{equation}
\sigma_{\beta(\alpha)}^2 + \frac{1}{c} \sigma_{\gamma(\alpha\beta)}^2 + \frac{1}{cn}\sigma^2
\leq \underbrace{\sigma_{\beta(\alpha)}^2 + \sigma_{\gamma(\alpha\beta)}^2 + \sigma^2}_
{\displaystyle\sigma_{y,\mathit{active}}^2},
\end{equation}
we obtain
\begin{equation}
\lambda
= b \cdot \frac{\myssa}{\sigma_{\beta(\alpha)}^2 + \frac{1}{c} \sigma_{\gamma(\alpha\beta)}^2 + \frac{1}{cn}\sigma^2}
\geq  b \cdot \frac{\myssa}{\sigma_{y,\mathit{active}}^2},
\end{equation}
and the Sz\H{o}kefalvi-Nagy inequality
\citep{ALPARGU2000,BRAUER1959,GUTMAN2017,KAIBLINGER2020,SHARMA2010,SZOKEFALVINAGY1918}
states that
\begin{equation}  \label{eq:szoek}
\myssa \geq \frac{(\alpha_{\max} - \alpha_{\min})^2}{2}
= \frac{\delta^2}{2} ~.
\end{equation}

\item[(ii),(iii)]
By \citet{KAIBLINGER2020} we have
for the $(v\times a)$ matrix $(\alpha_{i(j)})_{i,j}$
and for the $(u \times v \times a)$ array $(\alpha_{i(jk)})_{i,j,k}$,
\begin{equation}
\myssaa \geq
\frac{\delta^2}{2} \cdot \frac{m}{m-1}
\quad\text{ and }\quad
\myssaaa \geq
\frac{\delta^2}{2} \cdot \frac{m_2 m_3}{(m_2-1)(m_3-1)},
\end{equation}
respectively.  \qedhere
\end{enumerateroman}
\end{proof}

\begin{proof}[\textbf{Proof of {\upshape\Cref{thm:pivot}}}]
\begin{enumerateroman}
\item
We consider the parameters as competitors in
\begin{equation}  \label{eq:compet}
\text{not increasing $\mathit{df_1}$ \qquad and \qquad increasing $\mathit{df_2}$ and $\lambda$.}
\end{equation}
For each model in \Cref{tab:main},
we analyze the effect of the parameters on $\mathit{df_1}$, $\mathit{df_2}$ and $\lambda$,
using the arguments illustrated in \Cref{exam:pivotproof} below.
The inspection yields that for each model there is a sole winner,
which we call the ``pivot'' parameter.
We exemplify the scoring for four models:
\begin{center}
\begin{tabular}{l|llll}
& $A \succ B \succ C$
& $A \succ \boldsymbol{B} \succ \boldsymbol{C}$
& $V \succ A \succ B$
& $V \succ A \succ \boldsymbol{B}$ \\
\hline
parameters
& $b,c,n$
& $b,c,n$
& $v,b,n$
& $v,b,n$ \\
\hline
least increase in $\mathit{df_1}$ & $b,c,n$ & $b,c,n$ & $b,n$ & $b,n$ \\
most increase in $\mathit{df_2}$ & $n$ & $b$ & $n$ & $b$ \\
most increase in $\lambda$ & $b,c,n$ & $b$ & $b,n$ & $b$ \\
\hline
$\Rightarrow$ pivot & $n$ & $b$ & $n$ & $b$
\end{tabular}
\end{center}
Since by \Cref{lemm:monotony} the lead in \eqref{eq:compet} also means the lead in power increase, we thus obtain that the ``pivot'' yields the maximal power increase.

\item
Start with minimal parameters and apply \ref{item:pivotone}.

\item is equivalent to \ref{item:power}.  \qedhere
\end{enumerateroman}
\end{proof}

\begin{example}
We illustrate the proof of \Cref{thm:pivot}\ref{item:pivotone}
by showing the typical argument for most increase in $\mathit{df_2}$
and the typical argument for most increase in $\lambda$.
\begin{enumerateroman}  \label{exam:pivotproof}
\item
In the model $A \succ B \succ C$
the parameter $n$ is more effective than $b$ or $c$ in increasing $\mathit{df_2}$,
\begin{equation}  \label{eq:dftwo}
\mathit{df_2} = abc(n-1) = abcn - abc,
\end{equation}
since $b,c,n$ equally increase the positive term of \eqref{eq:dftwo},
but only $n$ does not increase the negative term.

\item
For the model $A \succ \boldsymbol{B} \succ \boldsymbol{C}$,
the parameter $b$ is more effective than $c$ or $n$ in increasing $\lambda$,
\begin{equation}  \label{eq:lambda}
\lambda = \frac{bcn \myssa}{\sigma^2 + n \sigma_{\gamma(\alpha\beta)}^2 + cn \sigma_{\beta(\alpha)}^2},
\end{equation}
since $b,c,n$ equally increase the numerator of \eqref{eq:lambda},
but only $b$ does not increase the denominator.
\end{enumerateroman}
\end{example}

\end{document}